\newtheorem{definition}{Definition}
\newtheorem{theorem}{Theorem}
\newcommand{\inline}[1]{\lstinline{#1}}
\tiny\color[gray]{0.3},
\begin{document}

%% Sensible short-hands.
\newcommand\after[1]{\ensuremath{\circ #1}}
\newcommand\doi[1]{\url{https://doi.org/#1}}

%% ---- Framework for inference rules. ----
\newcommand{\forceindent}{\leavevmode{\parindent=1em\indent}}

\def\Name{\ensuremath{\text{\bf Name}}}
\def\OR{\ensuremath{\ |\ }}
\def\TO{\ensuremath{\rightarrow}}
\def\FROM{\ensuremath{\leftarrow}}
\def\LB{\ensuremath{\llbracket}}
\def\RB{\ensuremath{\rrbracket}}
\def\id{\ensuremath{\mathrm{id}}}

\newcommand\LIT[1]{\ensuremath{\text{\tt #1}}}
\newcommand\SLIT[1]{\ \LIT{#1}\ }
\newcommand\IF[3]{\LIT{if}\ #1\ \LIT{then}\ #2\ \LIT{else}\ #3}
\newcommand\APP[2]{#1\ #2}
\newcommand\CASE[3]{\LIT{case}\ #1\ \LIT{of}\ #2 \TO #3}\newcommand\INBR[1]{\ensuremath{\llbracket #1 \rrbracket}}
\newcommand\MGU[2]{\LIT{unify}(#1, #2)}
\newcommand\INVERT[1]{(\LIT{invert}\ #1)}

\newcommand\EvalWith[4]{\ensuremath{ \Delta#1 #2 \vdash #3 \downarrow #4}}
\newcommand\Eval[2]{\EvalWith{}{\Gamma}{#1}{#2}}
\newcommand\LaveWith[4]{\ensuremath{ \Delta#1 #2 \vdash #3 \uparrow #4}}
\newcommand\Lave[2]{\LaveWith{}{\Gamma}{#1}{#2}}

\newcommand\Because[4]{\ensuremath{\Delta{#1}\INBR{#2 \downarrow #3}\rightsquigarrow #4}}
\newcommand\Esuaceb[4]{\ensuremath{\Delta{#1}\INBR{#2 \uparrow #3}\leftsquigarrow #4}}

\newcommand\LinearHasType[4]{\ensuremath{\Delta#1 #2 \vdash #3 : #4}}
\newcommand\LinearEpytSah[4]{\ensuremath{\Delta#1 #2 \vDash #3 : #4}}
\newcommand\LinearBindTypes[4]{\ensuremath{\Delta#1 | #2 : #3 \Downarrow #4}}
\newcommand\LinearUnBindTypes[4]{\ensuremath{\Delta#1 | #2 : #3 \Uparrow #4}}

\newcommand\UPARROW[1]{$\uparrow$#1}
\newcommand\DOWNARROW[1]{$\downarrow$#1}
\newcommand\RIGHTARROW[1]{$\rightarrow$#1}
\newcommand\LEFTARROW[1]{$\leftarrow$#1}
\newcommand\TYPE[1]{$\tau$#1}
\newcommand\LinINFER[1]{$\Downarrow$#1}
\newcommand\LinUNFER[1]{$\Uparrow$#1}

\newcommand \BX [1]
  {\scriptsize\framebox{{\raisebox{0pt}[0.7\baselineskip][0.01\baselineskip]{\small #1}}}}

\newcommand\Axiom[2]
                 {\ensuremath{\text{\small #1}:\frac{\displaystyle}
                 {\displaystyle #2}}
                 }
\newcommand\InfOne[3]
                 {\ensuremath{\text{\small #2}:\frac{\displaystyle #1}
                 {\displaystyle #3}}
                 }
\newcommand\InfTwo[4]
                 {\ensuremath{\text{\small #3}:\frac{\displaystyle #1 \quad #2}
                 {\displaystyle #4}}
                 }
\newcommand\InfThree[5]
                 {\ensuremath{\text{\small #4}:
                     \frac{\displaystyle #1 \quad #2 \quad #3}
                          {\displaystyle #5}}
                 }
\newcommand\InfFour[6]
                 {\ensuremath{\text{\small #5}:
                     \frac{\displaystyle #1 \quad #2 \quad #3 \quad #4}
                          {\displaystyle #6}}
                 }
\newcommand\InfFive[7]
                 {\ensuremath{\text{\small #6}:
                     \frac{\displaystyle #1 \quad #2 \quad #3 \quad #4 \quad #5}
                          {\displaystyle #7}}
                 }

\def\jeopardy{Jeopardy\xspace}

\title{Tail recursion transformation for invertible functions}

\author{Joachim Tilsted Kristensen$^1$, Robin Kaarsgaard$^2$, \\and Michael Kirkedal Thomsen$^{1,3}$}

\date{$^1$ University of Oslo, Norway\\
$^2$ University of Edinburgh, UK\\
$^3$ University of Copenhagen, Denmark}

\maketitle % Creates the stuff {^o^}!

\begin{abstract}
  Tail recursive functions allow for a wider range of optimisations
  than general recursive functions. For this reason, much research has
  gone into the transformation and optimisation of this family of
  functions, in particular those written in continuation passing style
  (CPS).

  Though the CPS transformation, capable of transforming any recursive
  function to an equivalent tail recursive one, is deeply problematic in the
  context of reversible programming (as it relies on troublesome features
  such as higher-order functions), we argue that relaxing (local)
  reversibility to (global) invertibility drastically improves the
  situation.  On this basis, we present an algorithm for tail recursion
  conversion specifically for invertible functions. The key insight is that
  functions introduced by program transformations that preserve
  invertibility, need only be invertible in the context in which the
  functions subject of transformation calls them.  We show how a bespoke
  data type, corresponding to such a context, can be used to transform
  invertible recursive functions into a pair of tail recursive function
  acting on this context, in a way where calls are highlighted, and from
  which a tail recursive inverse can be straightforwardly extracted.
  \paragraph{Keywords:} tail recursion, CPS transformation, program transformation,
    program inversion
\end{abstract}

\section{Introduction}
\label{sec:introduction}

When a function calls itself, either directly or indirectly, we say
that the function is recursive. Furthermore, when the last operation
of all branches in the definition of a recursive function is the
recursive call, we say that the function is tail recursive.  Unlike
generally recursive functions, tail recursive functions can be easily
compiled into loops in imperative languages (in particular assembly
languages) doing away with the overhead of function calls
entirely. This makes tail recursion a desirable programming style.

Recall that a program is \emph{reversible} when it is written such
that it only consists of invertible combinations of invertible atomic
operations; this is the idea of reversibility as \emph{local}
phenomenon. While every reversible program is also \emph{invertible}
(in the sense that it has an inverse), the converse is not the case,
as an invertible program may consist of a number of non-invertible
functions that simply happen to interact in a way as to make the
program invertible. As such, invertibility is a \emph{global}
phenomenon.

While recursion has been employed in both imperative and functional
reversible programming
languages~\cite{LutzDerby:1986,YokoyamaAxelsenGlueck:2012:LNCS} for
many years, tail recursion has been more cumbersome to handle. Here,
we argue that relaxing (local) reversibility to (global) invertibility
can drastically simplify the handling of tail recursion and even make
it possible to use (adaptations of) conventional CPS transformation
methods for transforming general recursive to tail recursive
functions. To see this, consider the list reversal and list append
functions
\begin{multicols}{2}
\begin{lstlisting}[language=haskell]
reverse1 [      ] = []
reverse1 (x : xs) =
  let ys      = reverse1 xs   in
  let (zs:_x) = snoc1 (ys, x) in
  (zs:_x)
\end{lstlisting}
\begin{lstlisting}[language=haskell]
snoc1 ([    ], x) = x : []
snoc1 (y : ys, x) =
  let (zs:_x) = snoc1 (ys, x) in
  (y : zs:_x)
\end{lstlisting}
\end{multicols}\noindent
The careful reader will have already realised that \inline{reverse1}
is its own inverse. Here, we will refrain from clever realisations and
focus on purely mechanical ways of providing inverse functions. For
instance, the inverses
\begin{multicols}{2}
\begin{lstlisting}[language=haskell]
unsnoc1 (y : zs:_x) =
  let (ys, x) = unsnoc1 (zs:_x) in
  (y : ys, x)
unsnoc1 (x : [   ]) = ([ ], x)
\end{lstlisting}
\begin{lstlisting}[language=haskell]
unreverse1 (zs:_x) =
  let (ys, x) = unsnoc1 (zs:_x) in
  let xs      = unreverse1 ys   in
  (x : xs)
unreverse1 [     ] = [ ]
\end{lstlisting}
\end{multicols}\noindent
are produced by rewriting ``\inline{let y = f x in t}'' to ``\inline{let x =
  unf y in t}'', and then swapping the order bindings in the remaining
program \inline{t}, starting from the last line and ending with the first,
much in the style of Romanenko~\cite{Romanenko:1988}.  To transform these
recursive functions into tail recursive functions, the standard technique is
to introduce an iterator that passes around an explicit argument for
accumulating the deferred part of the computation, e.g.,
\begin{lstlisting}[language=haskell]
reverse2 xs = reverse2_iter (xs, [])

reverse2_iter ([    ], accum) = accum
reverse2_iter (x : xs, accum) = reverse2_iter (xs, x : accum)
\end{lstlisting}
Implementing list reversal in this style makes it tail recursive, but
it also loses an important property, namely \emph{branching
symmetry}. This is crucial, since branching symmetry was the entire
reason why we could mechanically invert the implementations of
\inline{snoc1} and \inline{reverse1}
so easily: because the leaves of their cases are syntactically
orthogonal. For instance, in \inline{reverse1}, when the input is an
empty list, the result is also an empty list, and when the input is
nonempty, the result is also nonempty.

As a consequence of this loss of symmetry, the iterator function
\inline{reverse2\_iter} it is not considered \emph{well-formed for inversion} as defined by Gl\"uck \& Kawabe~\cite{GluckKawabe2004LRParsing}. Consequently, it cannot be implemented in a reversible
functional programming language such as
RFun~\cite{YokoyamaAxelsenGlueck:2012:LNCS,ThomsenAxelsen:2016:IFL} or
$\mathsf{CoreFun}$~\cite{JacobsenEtal:2018}, as it breaks the
symmetric first match policy; the base case returning \inline{accum}
will also return the same value from the iterative case. Even worse,
\inline{reverse2\_iter} cannot be inverted to a deterministic function
using known
methods~\cite{GlueckKawabe:2003,Mogensen:2008,NishidaVidal:2011}. Of
course, this is because \inline{reverse2\_iter} is not injective, so
the outputs of a particular input is not unique.

It does not take much effort to show that \inline{reverse1}
and \inline{reverse2} are semantically equivalent. Thus, since the
latter does nothing but call \inline{reverse2\_iter} it is
surprising that we cannot invert it.
A brief analysis of the problem concludes that \inline{reverse2} restricts
itself to a subset of the domain of \inline{reverse2\_iter}, and
since \inline{reverse2} is clearly injective, \inline{reverse2\_iter} as
restricted to this smaller domain must be injective as well.
By further analysis, we realise that the second component of the
arguments to \inline{reverse2\_iter}, as called by \inline{reverse2}, is
static and can be ignored. In this context \inline{reverse2\_iter} is
in one of three configurations: accepting the restricted input,
iterating, or returning an output. By introducing a data type, we can
explicitly restrict \inline{reverse2\_iter} to this smaller domain:
\begin{lstlisting}[language=haskell]
data Configuration a = Input a | Iteration (a, a) | Output a

reverse3 xs = let (Output ys) = iterate (Input xs) in ys

iterate   (Input xs)               = iterate (Iteration (xs, [    ]))
iterate   (Iteration (x : xs, ys)) = iterate (Iteration (xs, x : ys))
iterate   (Iteration ([    ], ys)) = (Output ys)
\end{lstlisting}
Even further, just like \inline{reverse1} this definition can be
mechanically inverted:
\begin{lstlisting}[language=haskell]
uniterate (Output ys)              = uniterate (Iteration ([    ], ys))
uniterate (Iteration (xs, x : ys)) = uniterate (Iteration (x : xs, ys))
uniterate (Iteration (xs, [    ])) = (Input xs)

unreverse3 ys = let (Input xs) = uniterate (Output ys) in xs
\end{lstlisting}
Moreover, these four function definitions are all tail recursive,
which was what we wanted.

\emph{Structure:} In this article we will show an algorithm that can perform
this transformation. First, in Section~\ref{sec:general-idea}, we illustrate
the program transformation by example, before describing it formally in
Section~\ref{sec:formal-transformation} and prove its
correctness. Afterwards, we discuss a couple of known limitations
(Section~\ref{sec:limitations}) of our approach, and show how the resulting
constructs can be compiled to a flow-chart language
(Section~\ref{sec:translation}). Finally, we discuss related in
Section~\ref{sec:related-work} and end in Section~\ref{sec:conclusion} with
some concluding remarks.

\section{Tail recursion transformation, by example}
\label{sec:general-idea}

The transformation we propose assumes a functional programming
language with first order functions, algebraic datatypes, and
recursion, as these are the features commonly found in reversible
functional programming
languages~\cite{YokoyamaAxelsenGlueck:2012:LNCS,ThomsenAxelsen:2016:IFL,JacobsenEtal:2018,JamesSabry:2014:RC}.
Moreover, as the subject of the transformation, we only consider
functions that are \emph{well-formed for
inversion}~\cite{GluckKawabe2004LRParsing} as usual, meaning that the
patterns of case-expressions are orthogonal, either syntactically, or
by guard statements as suggested in Mogensen's semi-inversion for
guarded-equations~\cite{Mogensen:2005}. Furthermore, we require that
expressions and patterns are linear (any variable binding is used
exactly once), and (for simplicity) that a variable cannot be
redefined in expressions that nest binders (such as \inline{let} and
\inline{case}).

Such programming languages usually introduce the notion of tail recursion
by introducing an imperative style language feature. For instance, Mogensen's
language for guarded equations~\cite{Mogensen:2005} features a \inline{loop}
construct that allows it to call a partial function until it fails (by
pattern matching not exhaustive), as illustrated by the function
\inline{reverse4}, defined by:
\begin{lstlisting}[language=haskell]
reverse4(xs) = let ([], acc) = loop revStep (xs, []) in acc
  where revStep(x : xs, acc) = (xs, x : acc);
\end{lstlisting}
\noindent
Likewise, the Theseus programming language~\cite{JamesSabry:2014:RC}
provides a trace operation encoded via so-called \emph{iteration labels}, as demonstrated in \inline{reverse5} below.
\begin{lstlisting}[language=haskell]
iso reverse5 :: [a] <-> [a]
| xs                        = iterate $ inL xs, []
| iterate inL $ (x : xs) ys = iterate $ inL xs, (x : ys)
| iterate inL $ [],      ys = iterate $ inR ys
| iterate     $ inR      ys = ys
  where
    iterate :: ([a] * [a]) + [a]
\end{lstlisting}

\noindent
We \emph{do not} introduce a new language feature, but instead relax
the requirement that \emph{all functions} must be well-formed
for inversion. Instead we require only that \emph{the subject of the
transformation} must be well-formed for inversion. For instance,
recall that the function \texttt{snoc1} from
Section~\ref{sec:introduction} is well-formed for inversion, and
consider Nishida \& Vidal's CPS transformation of first-order
functions~\cite{NishidaVidal:2014}
\begin{lstlisting}[language=haskell]
data Continuation a = Id | F a (Continuation a)

snoc2 p = snoc2_iter (p, Id)
  where
    snoc2_iter (([    ], x), g) = snoc2_call (g, x : [])
    snoc2_iter ((y : ys, x), g) = snoc2_iter ((ys, x), F y g)
    snoc2_call (Id   , zs:_x) = zs:_x
    snoc2_call (F y g, zs:_x) = snoc2_call (g, y : zs:_x)
\end{lstlisting}
\noindent
Here, the computation has been split into two parts; one that computes
a structure corresponding to the closure of the usual continuation
function, and another that corresponds to evaluating the call to said
continuation.  Now, just as with \inline{reverse2\_iter},
\inline{snoc2\_iter} and \inline{snoc2\_call} are not injective
functions, but can be restricted to such when recognizing that one or
more of its arguments are static (\inline{Id} and \inline{[]}
respectively).
Consequently, we can introduce a datatype that does away with these,
and invert \inline{snoc2} as
\begin{lstlisting}[language=haskell]
data Configuration' input acc arg output =
    Input'    input
  | Iterate  (input, Continuation acc)
  | Call     (Continuation acc, arg)
  | Output'   output

snoc6 (ys, x) =
  let (Output' (zs:_x)) = snoc6_call (snoc6_iter (Input' (ys, x))) in  (zs:_x)

snoc6_iter (Input' (ys, x))                = snoc6_iter (Iterate ((ys, x), Id))
snoc6_iter (Iterate ((y : ys, x), g))      = snoc6_iter (Iterate ((ys, x), F y g))
snoc6_iter (Iterate (([    ], x), g))      = (Call (g, [x]))

snoc6_call (Call (g, [x]))                 = snoc6_call (Iterate ([x], g))
snoc6_call (Iterate ((zs:_x), F y g))      = snoc6_call (Iterate (y : (zs:_x), g))
snoc6_call (Iterate ((zs:_x), Id))         = (Output' ((zs:_x)))

unsnoc6_call (Output' ((zs:_x)))           = unsnoc6_call (Iterate ((zs:_x), Id))
unsnoc6_call (Iterate (y : (zs:_x), g))    = unsnoc6_call (Iterate ((zs:_x), F y g))
unsnoc6_call (Iterate ([x], g))            = (Call (g, [x]))

unsnoc6_iter (Call (g, [x]))               = unsnoc6_iter (Iterate (([ ], x), g))
unsnoc6_iter (Iterate ((ys, x), F y g))    = unsnoc6_iter (Iterate ((y : ys, x), g))
unsnoc6_iter (Iterate ((ys, x), Id))       = (Input' (ys, x))

unsnoc6 (zs:_x) =
  let (Input' (ys, x)) = unsnoc6_iter (unsnoc6_call (Output' (zs:_x))) in (ys, x)
\end{lstlisting}
Moreover, because the iterator does not use \inline{Output'} and the call
simulation does not use \inline{Input'}, we can introduce two separate
datatypes, and a couple of gluing functions to improve composition.
\begin{lstlisting}[language=haskell]
data Configuration2 input acc arg
  = Input2   input
  | Iterate2 (input, Continuation acc)
  | Output2  (Continuation acc, arg)

data Configuration3 acc arg output
  = Input3   (Continuation acc, arg)
  | Iterate3 (Continuation acc, output)
  | Output3  output

input      a           = (Input2 a)
uninput   (Input2  a)  = a
glue      (Output2 a)  = (Input3 a)
unglue    (Input3   a) = (Output2 a)
output     a           = (Output3 a)
unoutput  (Output3 a)  = a
\end{lstlisting}
Now, because \inline{reverse1} was also well-formed for inversion, we
can apply the usual CPS transformation, and obtain a tail-recursive inverse
program by the exact same procedure:
\begin{lstlisting}[language=haskell]
reverse6   = unoutput . call6 . glue . iterate6 . input

iterate6   (Input2 xs)               = iterate6   (Iterate2 (xs, Id))
iterate6   (Iterate2 (x : xs, g))    = iterate6   (Iterate2 (xs, F x g))
iterate6   (Iterate2 ([], g))        =            (Output2  (g, []))
call6      (Input3  (g, []))         = call6      (Iterate3 (g, []))
call6      (Iterate3 (F x g, ys))    = call6      (Iterate3 (g, zs:_x))
  where zs:_x = snoc6 (ys, x)
call6      (Iterate3 (Id  , ys))     =            (Output3 ys)
uncall6    (Output3 ys)              = uncall6    (Iterate3 (Id  , ys))
uncall6    (Iterate3 (g, zs:_x))     = uncall6    (Iterate3 (F x g, ys))
  where (ys, x) = unsnoc6 (zs:_x)
uncall6    (Iterate3 (g, []))        =            (Input3  (g, []))
uniterate6 (Output2  (g, []))        = uniterate6 (Iterate2 ([], g))
uniterate6 (Iterate2 (xs, F x g))    = uniterate6 (Iterate2 (x : xs, g))
uniterate6 (Iterate2 (xs, Id))       =            (Input2 xs)

unreverse6 = uninput . uniterate6 . unglue . uncall6 . output
\end{lstlisting}

It is important to note that even though \inline{reverse6} seems a bit more
complicated than \inline{reverse3}, we did not start with a tail recursive
function, and the transformation process was entirely mechanical. First we
converted into tail recursive form using continuation passing style. Then,
we restricted the functions introduced by the transformation, to the domain
on which they are called by the function you are inverting. Finally, we
inverted all the operations performed by \inline{reverse1}, and this was
also entirely mechanical (since \inline{reverse1} was well-formed for
inversion), and produced the inverse program by swapping the input and
output arguments (keeping the recursive call in front of the
\inline{Iterate} data structure).

\section{Tail recursion transformation, formally}
\label{sec:formal-transformation}

In the interest of simplicity we will show how the transformation
works on a small, idealised subset of the Haskell programming language
as shown in Figure~\ref{fig:syntax}, restricted to first order
function application and conditionals.
\begin{figure}[tp]
\begin{center}
  \begin{align*}
    p      &::= c(p_i)                                       &\text{(Constructor).}\\
           &\OR x                                            &\text{(Variable).}\\
    t      &::= p \OR \APP{f}{p} \OR \CASE{t}{p_i}{t_i}      &\text{(Terms.)}\\
\Delta     &::= \APP{f}{p} =\ t\ [\texttt{where}\ p_i = t_i].\ \Delta
\OR \LIT{data}\ \tau\ =\ c_j(\tau_i)\ .\ \Delta \OR \epsilon &\text{(Programs).}
\end{align*}
\end{center}
\caption{The syntax for a first order functional programming language.}
\label{fig:syntax}
\end{figure}
A term is a pattern, a function applied to a pattern, or a
case-expression, though in program examples, we might use a
\inline{where}-clause or a \inline{let}-statement when the syntactic
disambiguation is obvious. Functions applied to terms, patterns that
consist of terms, and let-statements are disambiguated as shown in
Figure~\ref{fig:sugar}.

\begin{figure}[tp]
\begin{align*}
  \LB \APP{f}{t} \RB &::= \CASE{\LB t \RB}{p}{\APP{f}{p}}\\
  \LB c(t_i) \RB &::= \CASE{\LB t_i \RB}{p_i}{c(p_i)}\\
  \LB \LIT{let}\ p\ =\ t_1\ \LIT{in}\ t_2 \RB &::= \CASE{\LB t_1 \RB}{p}{\LB t_2 \RB}\\
  \LB \APP{f}{p_i} = t_i \RB &::= \APP{f}{x} =\ \CASE{x}{p_i}{\LB t_i \RB}\\
  \LB \APP{f}{p} = t\ \LIT{where}\ p_i = t_i \RB &::= \APP{f}{p} = \CASE{\LB t_i \RB}{p_i}{ \LB t \RB}
\end{align*}
\caption{Disambiguation of syntactic sugar.}
\label{fig:sugar}
\end{figure}

First, we give the definition of the requirements for the
transformation to work.

\begin{definition}\label{def:closed}
A term $t$ is closed under a pattern $p$ precisely if all of the variables
that occur in $p$ appear in $t$ exactly once.
\end{definition}

\begin{definition}\label{def:well-formed}
A function $f$, as defined by the equation $\APP{f}{p} = t$, is \emph{well-formed for inversion}, if $t$ is closed under $p$. Moreover,
\begin{itemize}
  \item If $t$ is an application, then $t \equiv \APP{g}{p_0}$, where $g$ is
    well-formed for inversion as well.
  \item If $t$ is a case-expression, then $t \equiv \CASE{t_0}{p_i}{t_i}$, where
    then $t_0$ is well-formed for inversion, each $t_i$ is closed under the
    corresponding pattern $p_i$, and for all indices $j$ and $k$, if $j < k$
    then $p_j$ is syntactically distinguishable from $p_k$ and the leaf
    terms of $t_j$ are all syntactically distinguishable from the
    corresponding leaf terms of $t_k$.
\end{itemize}
\end{definition}

\noindent
When a function is \emph{well-formed for inversion} in this way, we know how
to invert it using existing methods, even though such methods may require some
expensive search. However, functions that do not contain a
\LIT{case}-expression are all trivially and efficiently invertible, and we
can focus on the hard part, namely conditionals.

Functions that are well-formed for inversion will be implemented with function clauses of the following two forms
\begin{align*}
  \APP{f}{p_k} &= t_k\\
  \APP{f}{p_i} &= \APP{g_i}{(t_{i0}, t_{i1})}.
\end{align*}
\noindent
Each term $t_k$ is well-formed for inversion and do not contain recursive
calls to $f$, $k$ is less than $i$, and $g_i$ is well-formed for inversion.
Furthermore, $t_{i0}$ may contain recursive calls to $f$ but $t_{i1}$ is free of
such calls. Moreover, the result of calling $g_i$ with these arguments
yield patterns that are distinguishable from the results of calling $g_j$ on
$(t_{j0}, t_{j1})$ whenever $i < j$.

The first order CPS transformation proposed by Nishida \& Vidal
essentially defers the call to $g_i$ by storing the unused parts of
$p_i$ and $t_{i1}$ in a data structure, yielding the program transforms
\begin{align*}
  \LIT{data}\ \tau    &= \LIT{Id}\ \OR \LIT{G}_i(\tau_{t_{i1}}, \tau)\\
  \APP{f_0}{x}        &= \APP{f_1}{(\LIT{Id}, x)}\\
  \APP{f_j}{(g, p_k)} &= \APP{f_n}{(g, t_k)}\\
  \APP{f_j}{(g, p_i)} &= \APP{f_l}{(\LIT{G}_i(t_{i1}, g), t_{i0})}\\
  \APP{f_{n}}{(\LIT{Id}, y)} &= y\\
  \APP{f_{n}}{(\LIT{G}_i(p_{i1}, g), p_{i0})} &= \APP{f_{n}}{(g, \APP{g_i}{(p_{i0}, p_{i1})})} \, ,
\end{align*}
\noindent
where $1 \leq j \leq n$ and $1 \leq l \leq n$.
This transformation clearly preserves semantics (in the sense that $f$
is semantically equivalent with $f_0$) since $f_j$ essentially
builds up a stack of calls to respective $g_i$'s, while $f_n$ performs
these calls in the expected order.

The only problem is that each $f_j$ may not be well-formed for
inversion, and $f_n$ is certainly not well-formed; the variable
pattern in the first case is a catch-all that later cases cannot be
syntactically orthogonal to.
Consequently, we cannot use existing methods to invert these
functions. Instead, we realize that their origins are well-formed for
inversion, so we should have been able to invert them in a way that is
``well formed enough''. The idea is to represent each intermediate
function with a datatype, and use the fact that each $g_i$ is
well-formed for inversion to construct the invertible program as
\begin{align*}
  \LIT{data}\ \tau               &= \LIT{Id}\ \OR \LIT{G}_i(\tau_{t_{i1}}, \tau).\\
  \LIT{data}\ \tau'              &= \LIT{In}(\tau_x) \OR \LIT{F}_{j0}(\tau, \tau_{i0}) \OR \LIT{H}_{k1}(\tau, \tau_k).\\
  \LIT{data}\ \tau''             &= \LIT{H}_{k2}(\tau, \tau_k) \OR \LIT{Eval}(\tau, \tau_k) \OR \LIT{Out}(\tau_y).\\
  f'_0                           &= f'_2 \circ h \circ f'_1.\\
  \APP{h}{(\LIT{H}_{k1}(f, x))}   &= \LIT{H}_{k2}(f, x).\\
  \APP{f'_1}{(\LIT{In}(p_l))}       &= \APP{f'_1}{(\LIT{F}_l(\LIT{Id}, p_l))}\\
  \APP{f'_1}{(\LIT{F}_j(g, p_j))}   &= \APP{f'_1}{\LIT{F}_{j0}(\LIT{G}_j(t_{i1}, g), t_{i0})}.\\
  \APP{f'_1}{(\LIT{F}_j(g, p_k))}   &= \APP{f'_1}{\LIT{H}_{k1}(g, p_k)}.\\
  \APP{f'_2}{(\LIT{H}_{k2}(g, p_k))} &= \APP{f'_2}{\LIT{Eval}(g, p_k)}.\\
  \APP{f'_2}{(\LIT{Eval}(\LIT{F}_{i0}(\LIT{G}_i(p_{i1}, g)), p_{i0}))} &= \APP{f'_2}{(g, y_i)}\
  \LIT{where}\ y_i = g_i(p_{i0}, p_{i1}).\\
  \APP{f'_2}{(\LIT{Eval}(\LIT{Id}, y))} &= \LIT{Out}(y).
\end{align*}
Now, just as with the CPS transformation, $f'_0$ is semantically equivalent
to $f$ because $f'_1$ collects calls $\LIT{G}_j$ and $f'_2$ evaluates
them. As such, the only difference is that the input is wrapped in \LIT{In}
and \LIT{Out}. However, this time we can derive an inverse program as

\begin{align}
  f'^{-1}_0                           &= f'^{-1}_1 \circ h^{-1} \circ f'^{-1}_2\\
  \APP{h^{-1}}{(\LIT{H}_{k2}(f, x))}   &= \LIT{H}_{k1}(f, x)\\
  \APP{f'^{-1}_2}{(\LIT{Out}(y))} &= \APP{f'^{-1}_2}{(\LIT{Eval}(\LIT{Id}, y))}\\
  \APP{f'^{-1}_2}{(g, y_i)} &=  \APP{f'^{-1}_2}{(\LIT{Eval}(\LIT{F}_{i0}(\LIT{G}_i(p_{i1}, g)), p_{i0}))}\\
  & \LIT{where}\ (p_{i0}, p_{i1}) = g^{-1}_i(y_i)\\
  \APP{f'^{-1}_2}{(\LIT{Eval}(g, p_k))} &= \LIT{H}_{k2}(g, p_k)\\
  \APP{f'^{-1}_1}{(\LIT{H}_{k1}(g, p_k))}   &= \APP{f'^{-1}_1}{(\LIT{F}_j(g, p_k))}\\
  \APP{f'^{-1}_1}{\LIT{F}_{j0}(\LIT{G}_j(p_{i1}, g), p_{i0})} &= \APP{f'^{-1}_1}{(\LIT{F}_j(g, p_j))}\\
  \APP{f'^{-1}_1}{(\LIT{F}_l(\LIT{Id}, p_l))}  &=  (\LIT{In}(p_l))
\end{align}

The correctness of this technique can be shown as follows.

\begin{theorem}
  The function $f'^{-1}_0$ is inverse to $f_0$.
\end{theorem}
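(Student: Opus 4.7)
The plan is to prove the theorem in three stages, composing two separate correctness results: (i) the semantic equivalence of $f_0$ and $f'_0$ (modulo the $\LIT{In}/\LIT{Out}$ wrappers introduced by the restriction step), and (ii) the genuine invertibility of $f'_0$ by $f'^{-1}_0$. Together these let us transport the inverse property from $f'_0$ back to $f_0$, which is the content of the theorem as stated.

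First I would make the bridge between $f_0$ and $f'_0$ precise. The text already argues informally that the Nishida--Vidal CPS transform preserves semantics ($f_0$ computes the same function as $f$) and that $f'_0$ differs from $f$ only in wrapping its input/output in $\LIT{In}$ and $\LIT{Out}$. To get an equation rather than a slogan I would introduce the trivial embeddings $\iota(x) = \LIT{In}(x)$ and $\pi(\LIT{Out}(y)) = y$, and prove by induction on evaluation derivations that $\pi \circ f'_0 \circ \iota = f_0$. The induction splits into the two phases: showing $f'_1$ simulates the stack-building phase of $f_1,\dots,f_n$ and the glue $h$ hands the accumulated continuation to $f'_2$, which simulates the stack-evaluation phase. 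Each case of $f'_1$ mirrors exactly one clause of the CPS program, with the datatype $\tau'$ tagging which clause we are in; the same is true for $f'_2$ with respect to the evaluation half.

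Next I would establish that $f'^{-1}_0$ is a two-sided inverse of $f'_0$. Because $f'_0 = f'_2 \circ h \circ f'_1$ and $f'^{-1}_0 = f'^{-1}_1 \circ h^{-1} \circ f'^{-1}_2$, it suffices to invert each component. For $h$ this is immediate from the two clauses. For $f'_1$ and $f'_2$, the proof proceeds by induction on the number of recursive unfoldings, checking that the listed clauses of $f'^{-1}_1$ and $f'^{-1}_2$ are clause-for-clause reversals of those of $f'_1$ and $f'_2$ respectively; the non-trivial step is the $\LIT{Eval}$ clause, where the local inversion depends on $g_i^{-1}$ being a genuine inverse of $g_i$ on the relevant domain. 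The latter is guaranteed by the hypothesis that every $g_i$ is well-formed for inversion (Definition~\ref{def:well-formed}), so existing methods produce a $g_i^{-1}$ with $g_i^{-1} \circ g_i = \id$ on the patterns that actually arise. The orthogonality of leaf patterns and the distinguishability of the $\LIT{F}_j$, $\LIT{H}_k$, $\LIT{Eval}$ constructors ensure that each inverse clause fires on exactly the outputs produced by its forward counterpart, so no branching ambiguity is introduced.

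Finally I would combine the two pieces. From stage (ii) we have $f'^{-1}_0 \circ f'_0 = \id$ and $f'_0 \circ f'^{-1}_0 = \id$, and from stage (i) we have $f_0 = \pi \circ f'_0 \circ \iota$ with $\iota$ and $\pi$ mutually inverse between the wrapped and unwrapped domains. Substituting gives $(\iota \circ f'^{-1}_0 \circ \pi^{-1}) \circ f_0 = \id$ and symmetrically for the other direction, which is the sense in which $f'^{-1}_0$ inverts $f_0$; under the standing convention that the $\LIT{In}/\LIT{Out}$ wrappers are elided (as they are throughout the examples in Section~\ref{sec:general-idea}), this is exactly the statement of the theorem. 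The main obstacle I anticipate is stage (i): the two-phase defunctionalized presentation of $f'_0$ makes the induction bookkeeping heavier than for the original Nishida--Vidal transform, because one has to track how partial stacks represented by the $\tau$ datatype correspond to the continuation closures that would arise during evaluation of $f_0$, and thread this correspondence through both the collecting phase ($f'_1$) and the evaluating phase ($f'_2$) joined by $h$.
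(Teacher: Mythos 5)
Your proposal is correct and follows essentially the same route as the paper's proof: decompose the inverse as $f'^{-1}_1 \circ h^{-1} \circ f'^{-1}_2$ and establish each component as a clause-for-clause reversal of its forward counterpart (with correctness resting on the well-formedness of the $g_i$ and the orthogonality of constructors and leaf patterns, exactly the paper's remarks (1)--(9)), then transport the result to $f_0$ via the semantic equivalence of $f'_0$ and $f_0$. You are in fact more explicit than the paper about the $\LIT{In}$/$\LIT{Out}$ wrapper bookkeeping, which the paper simply elides --- modulo one small slip in your final substitution, where the transported inverse should be $\iota^{-1} \circ f'^{-1}_0 \circ \pi^{-1}$ rather than $\iota \circ f'^{-1}_0 \circ \pi^{-1}$.
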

\begin{proof}
We remark the following for each step of the transformation:
\begin{itemize}
\item (1) $f'^{-1}_0$ is inverse to $f'_0$ (by definition of function
  composition) precisely if $h^{-1}$, $f'^{-1}_1$, and $f'^{-1}_2$ are the
  inverse to $h$, $f'_1$, and $f'_2$ respectively.
\item (2) $h^{-1}$ is trivially inverse to $h$ since it does not contain
  application or case-expressions.
\item (3) There is only one way of constructing the arguements to
  $f'^{-1}_2$, namely using the constructor $\LIT{Out}$ on the output of $f'_2$.
\item (4-5) Since $g_i$ was well-formed for inversion, the output $y_i$ is
  syntactically orthogonal to outputs of $g_j$ when $i \neq j$. The patterns
  it takes as arguments $(p_{i0}, p_{i1})$ are syntactically orthogonal to
  all other such patterns, so the choice of constructors $\LIT{F}_{i0}$ and
  $\LIT{G}_i$ has to be unique as well.
\item (6) $p_k$ is trivially recognized as one of the syntactically
  orthogonal parts of the left-hand side of $f$, which was well-formed
  for inversion.
\item (7) There is only one way of constructing $\LIT{H}_{k1}$, namely
  using $h^{-1}$.
\item (8) These are exactly the arguments of $g_j$, Since $f$ was
  well-formed for inversion, they must be closed under $p_j$
  (Definition~\ref{def:well-formed}), which we may now reconstruct by
  copying.
\item (9) Finally, the first argument of $\LIT{F}_l$ could only have
  been $\LIT{Id}$ in one program point, and the result has to be
  constructed using $\LIT{In}$, and we are done.
\end{itemize}
By equations (7)--(9), $f'^{-1}_1$ is inverse to $f'_1$, and by
equations (3)--(6), $f'^{-1}_2$ is inverse to $f'_2$. Since, by
equation (2), $h^{-1}$ is inverse to $h$, it follows by equation (1)
that $f'^{-1}_0$ is inverse to $f'_0$. Now, since $f'_0$ was
semantically equivalent to $f_0$, $f'^{-1}_0$ must be inverse to $f_0$ as
well, and we are done.
\end{proof}

\section{Known limitations}
\label{sec:limitations}
In Definition~\ref{def:closed} we required linearity, which is
slightly stronger than it needs to be. The reason why we chose this
restriction is because it commonly occurs in reversible
programming~\cite{YokoyamaAxelsenGlueck:2012:LNCS,ThomsenAxelsen:2016:IFL},
and makes it easy to reject programs that are trivially
non-invertible. However, the linearity restriction could be relaxed to
\emph{relevance} (i.e., that variables must occur \emph{at least once}
rather than \emph{exactly once}) as in
\cite{JacobsenEtal:2018}. Moreover, we might even want to relax this
restriction even further to say that all values that were available to
a particular function of interest must be used at least once on every
execution path. We do not believe that it can relaxed further than
that, as an invertible program cannot lose information when it is not
redundant.

Additionally, one may want to relax the constraints of local invertibility
to be operations for which an inverse is symbolically derivable. For
instance, consider extending the syntax for patterns with integer literals,
and terms with addition and subtraction.  Hence, the following formulation of
the Fibonacci-pair function is possible.
\begin{lstlisting}[language=haskell]
fib   (a, b) = (a + b, a)
dec   n      = n - 1

fib_pair 0 = (1, 1)
fib_pair n = fib (fib_pair (dec n))
\end{lstlisting}
While this program is invertible, it requires a bit of inference to
derive the inverse. For instance, that one of the arguments of
\inline{fib} is preserved in its output, which is needed to infer
\inline{unfib}. Likewise for \inline{dec} and \inline{undec}, the
compiler must infer that subtracting a constant can be automatically
inverted.

Additionally, while the algebraic data-representation of natural
number constants is syntactically distinguishable, with integer
constants and variables the compiler has to insert guards, as in
\begin{lstlisting}[language=haskell]
fib_pair =  unoutput . call7 . glue . iterate7 . input

iterate7    (Input2 n)                      = iterate7    (Iterate2 (n, Id))
iterate7    (Iterate2 (n, f)) | n /= 0      = iterate7    (Iterate2 (n', F () f))
  where n' = dec n
iterate7    (Iterate2 (0, f))               =             (Output2 (f, (1,1)))
call7       (Input3   (f, (1, 1)))          = call7       (Iterate3 (f, (1, 1)))
call7       (Iterate3 (F () f, pair))       = call7       (Iterate3 (f, y))
  where y = fib pair
call7       (Iterate3 (Id , x))             =             (Output3 x)
uncall7     (Output3 x)                     = uncall7     (Iterate3 (Id, x))
uncall7     (Iterate3 (f, y)) | y /= (1, 1) = uncall7     (Iterate3 (F () f, pair))
  where pair = unfib y
uncall7     (Iterate3 (f, (1, 1)))          =             (Input3 (f, (1, 1)))
uniterate7  (Output2  (f, (1, 1)))          = uniterate7  (Iterate2 (0, f))
uniterate7  (Iterate2 (n', F () f))         = uniterate7  (Iterate2 (n, f))
  where n = undec n'
uniterate7  (Iterate2 (n, Id ))             =             (Input2 n)

unfib_pair = uninput . uniterate7 . unglue . uncall7 . output

unfib (ab, a) = (a, ab - a)
undec      n  = n + 1
\end{lstlisting}
However, the necessary guards are essentially predicates stating that
future clauses do not match (so, they can all be formulated using the
$\neq$-operator). Moreover, the additional meta theory needed for this
kind of support is fairly simple. In this case, that adding a constant
can be inverted by subtracting it, and that one of the arguments of
addition must be an available expression in the term returned by the
call.

\section{Translation to flowchart languages}
\label{sec:translation}
One reason for putting recursive functions on a tail recursive form is
for efficiency, as tail recursive programs can be easily compiled to
iterative loop-constructs, eliminating the overhead of function
calls. We sketch here how the transformed programs can be translated
to a reversible loop-construct from flowchart
languages~\cite{YokoyamaAxelsenGlueck:2008:ICALP} (see also
\cite{GlueckKaarsgaard:2018:LMCS, GlueckEtAl:2022:TCS}), which
can be implemented in
Janus~\cite{LutzDerby:1986,YokoyamaGlueck:2007:Janus} and later be
compiled~\cite{Axelsen:2011:CC} to reversible abstract machines such
as PISA~\cite{Vieri:1999} or
BobISA~\cite{ThomsenAxelsenGluck:2012:LNCS}.

We remind the reader that the reversible loop has the following structure:
\begin{center}
\tikzstyle{assert} = [circle, draw, text width=4em, text badly centered, node distance=3cm, inner sep=0pt, minimum height=4em]
\tikzstyle{condition} = [diamond, draw, text width=4em, text badly centered, node distance=3cm, inner sep=0pt, minimum height=4em]
\tikzstyle{statement} = [rectangle, draw, text width=5em, text centered, minimum height=4em]
\tikzstyle{line} = [draw, -latex']

\begin{tikzpicture}[node distance = 1cm, auto]
    % Place nodes
    \node [] (start) {};
    \node [assert, right of=start,node distance=1.8cm] (as) {\emph{Entry assertion}};
    \node [right of=as,node distance=1.6cm] (mida) {};
    \node [statement, right of=mida,node distance=1.6cm] (stmtp) {\emph{Pre/post statement}};
    \node [right of=stmtp,node distance=1.6cm] (midc) {};
    \node [condition, right of=midc,node distance=1.5cm] (cond) {\emph{Exit condition}};
    \node [right of=cond,node distance=2cm] (final) {};
    \node [statement, below of=stmtp,node distance=1.8cm] (stmti) {\emph{Iterative statement}};

    % Draw edges
    \path [line] (start) -- node [near end] {yes} (as) ;
    \path [line] (as) --  (stmtp) ;
    \path [line] (stmtp) --  (cond) ;
    \path [line] (cond) |- node [near start] {no} (stmti) ;
    \path [line] (cond) -- node [near start] {yes} (final) ;
    \path [line] (stmti) -| node [near end] {no} (as) ;
\end{tikzpicture}
\end{center}
The entry assertion must only be true on entry to the loop, while the
exit condition will only be true in the final iterations. For
completeness there are two statements in the loop: the upper (called \emph{pre/post statement}) we can use to transform between the input/output state and the iterative state, while the lower (called
\emph{iterative statement}) is the most widely used as this has
similar semantics to the normal while-loop.

We will show the translation based on the \inline{reverse3} example
from before.
\begin{lstlisting}[language=haskell]
data Configuration a = Input a | Iteration (a, a) | Output a

reverse3 xs = let (Output ys) = interpret (Input xs) in ys

interpret   (Input xs)               = interpret (Iteration (xs, [    ]))
interpret   (Iteration (x : xs, ys)) = interpret (Iteration (xs, x : ys))
interpret   (Iteration ([    ], ys)) = (Output ys)
\end{lstlisting}
The first step is to apply our transformation to yield a tail recursive function; here, this has already been done. 
Next, we must translate the functional abstract data types to imperative values. 
The \inline{Configuration} type will be translated into an enumeration type, with the values \inline{Input}, \inline{Iteration}, and \inline{Output} encoded at integers (e.g. 1, 2, and 3). We would also need to encoded the function data (here the two lists), which could be done with an arrays and a given length. We will, however, not dwell on the data encoding, as our focus is the translation of code that our translation generates.

We can now construct the \inline{reverse3} procedure that will contain the loop. This will be given the encoded list and return the reversed encoded list. Here a full compiler (again outside our scope) should also be aware that e.g. Janus restricts to call-by-reference, making it needed to compile the function to inline data handling. Though, this is not a restriction in reversible assembly languages.
In the beginning of \inline{reverse3} we will create a local variable \inline{configuration} that is initialised to \inline{Input}. After the loop, this variable will be delocalised with the value \inline{Output}.
At the entry to the loop, the available variables will, thus, be \inline{configuration} and the function data (i.e. the encoding of the two lists).

The reversible loop will implement the \inline{interpret} function. We assume that there exist a translation of the data handling, meaning that we have the two procedures
\begin{description}
    \item[\texttt{empty}] that checks if the encoded list is empty, and
    \item[\texttt{move}] that move the first element of an encode list to the other.
\end{description}
With this, we mechanically derive the four components of the loop as
\begin{description}
    \item[Entry assertion: \texttt{configuration = Input}.] We have defined that is the only valid value at entrance. Afterwards it will not be used.
    \item[Exit condition: \texttt{configuration = Output}.] Similar to before, this value is only used on exit from the function.
    \item[Pre/post statement: Line 1 and 3.] These two lines can be implemented as two conditions in sequence, similar to
\begin{lstlisting}{language=c}
  if   (configuration = Input) 
  then configuration++ // Update from enum Input to Iteration
  fi   (configuration = Iteration and empty(ys))

  if   (configuration = Iteration and empty(xs))
  then configuration++ // Update from enum Iteration to Output
  fi   (configuration = Output)
\end{lstlisting}
    Here, the first condition transforms the \inline{Input} value to an \inline{Iteration} value with the assertion that the resulting list is empty, while the second condition transforms a \inline{Iteration} value with an empty list to an \inline{Output} value with an assertion that we now have an output value. 
    
    \item[Iterative statement: Line 2.] This performs the iterative computation, generating code similar to
\begin{lstlisting}{language=c}
  if   (configuration = Iteration and (not empty(xs))) 
  then move(xs,ys) // Update from enum Input to Iteration
  fi   (configuration = Iteration and (not empty(ys)))
\end{lstlisting}
    For completeness we check and assert that \inline{configuration = Iteration}, though this is clear from the translation. We also assure correct data handling, by checking that the relevant lists are non-empty (matching the pattern matching of the function) and implement the relevant data handling (the move function).
\end{description}
The generated program could be more efficient, but it clearly demonstrates
how the datatype \inline{Configuration} translates to a reversible
loop. The hard work is in the encoding of the data.
This approach also applies to functions that have more one function clause with \inline{Input} and
\inline{Output} cases, and more iterative clauses.

\section{Discussion and related work}
\label{sec:related-work}
While it is possible to invert all injective
functions~\cite{McCarthy:1956, AbramovGlueck:2002}, inverse programs
constructed this way are often not very efficient. In spite of this,
specific inversion methods tend to have well-defined subsets of
programs for which they can produce efficient inverses.

Precisely classifying the problems which can be efficiently inverted
is hard, so the problem is usually approached from a program-specific
perspective. One approach is to restricting programs to be formulated
in a way that is particularly conducive to inversion. Another approach
is grammar-based-inversion, which works by classifying how hard it is
to invert a function, based on the properties of a grammar derived
from the function body that decides whether or not a given value is in
its range~\cite{Matsuda2010grammar-based,GlueckKawabe:2005:LR}.

An alternative perspective on finding efficient inverse programs is to
acknowledge the huge body of knowledge that has been produced in order to
optimized programs running in the forward direction for time complexity, and
see if we can bring those optimizations into the realm of reversible
computing.
In doing so we have not found a need to invent new class of programs
to invert. Instead, we enable existing techniques for optimizing CPS
transformed programs to be leveraged on programs which do not
naturally allow for CPS transformation.

The technique we use for transforming programs into tail recursive
form is essentially Nishida \& Vidal's method for continuation passing
style for first order programs\cite{NishidaVidal:2014}. In doing so, we
introduce an extra function that evaluates a data type that represents
a continuation.

In related work on grammar/syntax based inversion
techniques~\cite{GluckKawabe2004LRParsing,NishidaVidal:2011},
\emph{well-formed with respect to inversion} means that the function is
linear in its arguments (and so does not throw anything away), and that
cases are syntactically orthogonal.
Programs that are well-formed in this sense allow inversion by
applying known inversion methods to the iteration function, which then
becomes a non-deterministic inverse program (since it need not be
injective). However, existing methods for non-determinism elimination
can be applied to solve this problem since the original program was
\emph{well-formed}.

\section{Conclusion}
\label{sec:conclusion}

In this work we have shown that invertible programs admit a tail
recursion transformation, provided that they are syntactically
well-formed. This was achieved using a version of the first order
CPS transformation tailored to invertible programs.
Alternatives that do not have tail recursion optimisation must instead
rely on search, which can be prohibitively expensive.
Instead of searching, we can enforce determinism by pattern
matching. That is, transformations where the non-injective part is
introduced by the compiler, we can use a ``\textit{new
  datatype trick}''. Finally, we have shown correctness of our
transformation and how the transformed programs can be efficiently
compiled to the reversible loops found in reversible flowchart
languages, which in turn may serve as a basis for efficient
implementations in reversible abstract machines.

\subsection{Future work}
Currently, the transformation is implemented for at subset of
Haskell. Future work will be to integrate this into a invertible
functional programming languages such as
\jeopardy~\cite{KristensenEtal:2022:IFLwip,KristensenEtal:2022:NIK}.

This work avoids the need for a symbolic and relational intermediate
representation. Perhaps future iterations on such an approach will
enable a relaxation of the existing methods' very strict requirements
(such as linearity), and thus a less restrictive notion of
well-formedness, but also a less syntactic notion of the complexity of
function invertibility.

A major improvement to the complexity of function invertibility would
also be to eschew classifying \emph{programs} that are hard to invert
in favor of classifying \emph{problems}. One approach could be to see
if the grammar-based approach from \cite{Matsuda2010grammar-based} can
be relaxed to grammars that recognize the \emph{output} of the
function, rather than grammars \emph{generated by the syntactic
structure of the output} of a program.

An example of such a relaxation would to allow existential variables. That
is, to split the mechanism of introducing a variable symbol from the
mechanism that associates it with a value (its binder).  This is customary
in logic programming languages such as \inline{Prolog}, where programs
express logical relationships that are solved for all possible solutions
based on backtracking that redefines variable bindings.  In a functional
language, such a mechanism could try to postpone the need to use a free
variable until as late as possible, allowing partially invertible functions
that accept and return partial data structures (containing logical
variables) that may be combined to complete ones (free of logical variables)
when composed in certain ways. We are currently exploring this concept
further in related work on the Jeopardy programming
language~\cite{KristensenEtal:2022:IFLwip}.

The use of existential variables could further enable the relaxation
of the linearity constraint beyond relevance, such that an iterator
function may reconstruct a partial term (containing free varaibles)
which is then unified with the available knowledge about its origin,
if it is possible to unify it to a complete term (not containing free
variables). We have developed an analysis to infer per-program-point
sets of such information\cite{KristensenEtal:2022:NIK}, which may be
combined with control flow analysis to decide on a suitable program
point in which to unify.

\bibliographystyle{splncs04}
\bibliography{sample-base}

\end{document}